\newtheorem{theorem}{Theorem}[section]
\newtheorem{lemma}[theorem]{Lemma}
\newtheorem{definition}{Definition}[section]
\newcommand{\R}{\mathbb{R}}
\newcommand{\e}{\varepsilon}
\newcommand{\ri}{\mathrm{ri}}
\newcommand{\T}{\top}
\newcommand{\X}{\mathcal{X}}
\newcommand{\Y}{\mathcal{Y}}
\newcommand{\A}{\mathcal{A}}
\newcommand{\emp}{\varnothing}
\DeclareMathOperator*{\argmax}{arg\,max}
\DeclareMathOperator*{\argmin}{arg\,min}
\title{Convergence analysis and acceleration of the smoothing methods for solving extensive-form games}
\author{Keigo Habara, Ellen Hidemi Fukuda, Nobuo Yamashita}
\date{March 2023}
\begin{document}

\maketitle

\abstract{%
    The extensive-form game has been studied considerably in recent years.
It can represent games with multiple decision points and incomplete information, and hence it is helpful in formulating games with uncertain inputs, such as poker.
We consider an extended-form game with two players and zero-sum, i.e., the sum of their payoffs is always zero.
In such games, the problem of finding the optimal strategy can be formulated as a bilinear saddle-point problem.
This formulation grows huge depending on the size of the game, since it has variables representing the strategies at all decision points for each player.
To solve such large-scale bilinear saddle-point problems, the excessive gap technique (EGT), a smoothing method, has been studied.
This method generates a sequence of approximate solutions whose error is guaranteed to converge at $\order{1/k}$, where $k$ is the number of iterations.
However, it has the disadvantage of having poor theoretical bounds on the error related to the game size.
This makes it inapplicable to large games.

Our goal is to improve the smoothing method for solving extensive-form games so that it can be applied to large-scale games.
To this end, we make two contributions in this work.
First, we slightly modify the strongly convex function used in the smoothing method in order to improve the theoretical bounds related to the game size.
Second, we propose a heuristic called centering trick, which allows the smoothing method to be combined with other methods and consequently accelerates the convergence in practice.
As a result, we combine EGT with CFR+, a state-of-the-art method for extensive-form games, to achieve good performance in games where conventional smoothing methods do not perform well.
The proposed smoothing method is shown to have the potential to solve large games in practice.
}

\section{Introduction}
This study concerns improving the smoothing method for solving extensive-form games with imperfect information.
An \textit{extensive-form game} is a description of a game played by multiple players, where the game state corresponds to a node in a rooted tree.
The state of the game moves down the tree according to the actions of each player, and each player receives a gain when a leaf of the tree is reached.
This game is a good model for games with information gaps between players, such as poker, because it can represent the incomplete information.
We only deal with games in which there are two players, and the sum of their payoff is zero. 
It is known that in such two-person zero-sum games, the problem of finding the optimal strategy can be formulated as a simple minimax problem since the players do not cooperate with each other.

There are two well-known methods for solving extensive-form games: \textit{counterfactual regret minimization} (CFR)~\cite{zinkevich2007regret} and \textit{excessive gap technique} (EGT)~\cite{nesterov2005excessive}.
CFR is an application of regret minimization, a framework used in online learning, to extensive-form games.
Its variant, CFR+~\cite{tammelin2014solving}, was developed to solve a variant of a two-player poker game called Texas Hold'em, which was a challenging problem in artificial intelligence~\cite{bowling2015heads, brown2018superhuman}.
CFR+ is suited for analyzing large games because the solution error is bounded linearly with the game size.
However, it is only guaranteed to converge at $\order{1/\sqrt{k}}$ rate, where $k$ is the number of iterations.
EGT is a smoothing method for the bilinear saddle-point problem, and its application to extensive-form games has been studied~\cite{hoda2010smoothing}.
EGT is theoretically guaranteed to converge with rate $\order{1/k}$, which is faster than CFR+, but it is not suitable for solving large games.
This is because the term $D/\sigma$, which appears in the error bound, depends poorly on the game size.
Here $\sigma$ is the strong convexity parameter of a function $d$ called \textit{prox-function} used in EGT and $D:=\max_{\bm x} d(\bm x)$.
Various prox-functions have been proposed in the literature~\cite{hoda2010smoothing,kroer2020faster,farina2021better}, with $D/\sigma$ being bounded by a cubic order of the game size.

Our goal is to improve the prox-function so that EGT can be applied to large extensive-form games.
To this purpose, we make two contributions.
First, we improve the bound of $D/\sigma$ from $M_Q^3\ln\max_{I\in\mathcal{I}}\abs{\A(I)}$ to $M_Q^2\ln\abs{\Sigma}$ by eliminating the first-order term of the prox-function proposed in~\cite{farina2021better}.
Here $M_Q$ is the maximum value of the $L_1$-norm among the feasible points of the bilinear saddle-point problem, and $\abs{\Sigma}$ is the dimension of the feasible set, both of which are bounded by the game size.
Also, $\max_{I\in\mathcal{I}}\abs{\A(I)}$ is the maximum number of legal actions at each decision point.
Second, we propose a heuristic in EGT that we call the centering trick.
This trick modifies the prox-function in a way that it takes the minimum value in a temporary solution. 
It is expected to improve the accuracy of the smooth approximation of the bilinear saddle-point problem inside the EGT.
This heuristic accelerates convergence in practice and can be combined with other methods by using their solutions.
Numerical experiments show that EGT with the heuristic combined with CFR+ performs best among several methods, including CFR+, for games of a scale where conventional EGT does not perform well.
This suggests that the proposed smoothing method is effective even for large games.

The structure of this paper is given as follows.
Section \ref{sec-pre} introduces the basics of convexity analysis.
Section \ref{sec-smoothing} presents the bilinear saddle-point problem, the prox-function, and the smoothing method (EGT).
Section \ref{sec-efg} introduces the extensive-form games and explains how to transform the problem of finding the optimal strategy of the game into a problem covered by EGT.
In Section \ref{sec-prox}, we propose a prox-function, and we will show that it improves theoretical convergence.
Section \ref{sec-experiments} provides the centering heuristic that accelerates the convergence of EGT in practice and confirms its performance through numerical experiments.
Section \ref{sec-conclusions} summarizes our contributions.
\section{Preliminaries}
\label{sec-pre}

This section introduces the basic properties of the convex analysis used in this paper.
Let us denote $\R^n$ as the $n$-dimentional Euclidean space and $\ri S$ as the relative interior of $S\subset\R^n$. 
We will denote $\bm x^\T$ and $\bm A^\T$ as the transpose of the $n$-dimentional vector $\bm x\in\R^n$ and the $n\times m$ matrix $\bm A\in\R^{n\times m}$.
For a convex compact set $S\subset\R^n$, let us define the conjugate function $f^*\colon\R^n\to\R$ of a function $f\colon S\to\R$ by
\begin{align}
    f^*(\bm\xi) := \max_{\bm x\in S}\qty{
    \bm\xi^\T \bm x - f(\bm x)
    }.
\end{align}
If the above maximizer is unique, then by Danskin's theorem, $f^*$ is differentiable at $\bm\xi\in\R^n$, and its derivative is given by
\begin{align}
    \nabla f^*(\bm\xi) = \argmax_{\bm x\in S}\qty{
    \bm\xi^\T \bm x - f(\bm x)
    }.
\end{align}
\section{Smoothing methods}
\label{sec-smoothing}
In this section, the bilinear saddle-point problem and the prox-function are explained, followed by an overview of EGT.

\subsection{Bilinear saddle-point problems}
The bilinear saddle-point problems (BSPPs), which EGT covers, are written in the following form:
\begin{align}
    \min_{\bm x\in\X} \max_{\bm y\in\Y} \bm x^\T \bm A \bm y,
    \label{eqn-bspp}
\end{align}
where $\bm A\in\R^{n\times m}$ is a matrix and $\X \subset \R^n, \Y \subset \R^m$ are convex and compact sets.
The \textit{adjoint form} of the problem is given by
\begin{align}
    \max_{\bm y\in\Y} \min_{\bm x\in\X} \bm x^\T \bm A \bm y.
    \label{eqn-bspp-adjoint}
\end{align}
By Minimax Theorem, these two problems achieve the same optimal value.
In other words, the following equation holds for the optimal solution $\bm x^*\in\X$ and $\bm y^*\in\Y$ of~\eqref{eqn-bspp} and~\eqref{eqn-bspp-adjoint}:
\begin{align}
    \max_{\bm y\in\Y} (\bm x^*)^\T \bm A \bm y
    = 
    \min_{\bm x\in\X} \bm x^\T \bm A \bm y^*.
\end{align}
The error of the pair of the solutions $(\bar{\bm x}, \bar{\bm y})$ can be defined by 
\begin{align}
    \e(\bar{\bm x}, \bar{\bm y}) 
    := \max_{\bm y\in\Y} \bar{\bm x}^\T \bm A \bm y 
    - \min_{\bm x\in\X} \bm x^\T \bm A \bar{\bm y}
    \ge 0,
    \label{eqn-error}
\end{align}
and $\e(\bar{\bm x},\bar{\bm y}) = 0$ if and only if $\bar{\bm x}$ and $\bar{\bm y}$ are the optimal solutions of problems~\eqref{eqn-bspp} and~\eqref{eqn-bspp-adjoint}.

\subsection{Prox-functions}
BSPPs~\eqref{eqn-bspp}, the problems covered by EGT, can be regarded as non-smooth minimization problems because $f(\bm x):= \max_{\bm y\in\Y} \bm x^\T \bm A \bm y$ is generally non-smooth.
One way to $f$ is to consider a \textit{prox-function} which we define below.
\begin{definition}
\label{def-prox}
A function $d\colon S\to\R$ is called a prox-function on a convex compact set $S\subset\R^n$ when satisfying the following conditions:
\begin{itemize}
    \item $d$ is twice differentiable in $\ri S$;
    \item $d$ is $\sigma$-strongly convex in $\ri S$ with respect to some norm $\norm{\cdot}_S$ defined on $\R^n$, i.e.,
    \begin{align}
        \bm\xi^\T \nabla^2 d(\bm x) \bm\xi
        \ge \sigma\norm{\bm\xi}_S^2 \quad \forall \bm x\in\ri S,
        \forall \bm\xi\in\R^n;
    \end{align}
    \item $\min_{\bm x\in S} d(\bm x) = 0$.
\end{itemize}
\end{definition}
For some prox-function $d_2\colon\Y\to\R$ and parameter $\mu_2 > 0$, define the \textit{smoothing} of $f$ as
\begin{align}
    f_{\mu_2}(\bm x) 
    &:= \max_{\bm y\in\Y} \qty{
    \bm x^\T \bm A \bm y
    - \mu_2 d_2(\bm y)
    } 
    \label{eqn-f-mu}
    \\
    &= \mu_2 d_2^*\qty(
    \bm A^\T \bm x / \mu_2
    ).
\end{align}
Since $d_2$ is strongly convex, the maximizer of~\eqref{eqn-f-mu} is unique, then $d_2^*$ is differentiable, which in turn shows that $f_{\mu_2}$ is also differentiable.
Let $D_2 := \max_{\bm y\in\Y} d_2(\bm y)$.
By the definition of $f_{\mu_2}$, we have
\begin{align}
    f_{\mu_2}(\bm x) \le f(\bm x) 
    \le f_{\mu_2}(\bm x) + \mu_2 D_2
    \quad \forall \bm x \in \X,
    \label{eqn-f-f-mu}
\end{align}
then $f_{\mu_2}$ is a good smooth approximation of $f$ for small $\mu_2>0$.

\subsection{Excessive gap technique}
This section provides an overview of EGT.
Let $\phi(\bm y) := \min_{\bm x\in\X} \bm x^\T\bm A\bm y$. 
For some prox-function $d_1\colon\X\to\R$ and $\mu_1 > 0$, define the smooth approximation of $\phi$ similarly to~\eqref{eqn-f-mu}:
\begin{align}
    \phi_{\mu_1}(\bm y) 
    &:= \min_{\bm x\in\X}\qty{
    \bm x^\T\bm A\bm y + \mu_1 d_1(\bm x)
    },
\end{align}
and we also have 
\begin{align}
    \phi_{\mu_1}(\bm y) - \mu_1 D_1 \le \phi(\bm y) \quad \forall \bm y\in\Y,
    \label{eqn-phi-phi-mu}
\end{align}
where $D_1 := \max_{\bm x\in\X} d_1(\bm x)$.
Here we consider the following condition, called \textit{excessive gap condition}:
\begin{align}
    f_{\mu_2}(\bm x) \le \phi_{\mu_1}(\bm y).
    \label{eqn-egc}
\end{align}
If $(\bm x,\bm y)\in\X\times\Y$ and $\mu_1,\mu_2 > 0$ satisfy the excessive gap condition, the error of $(\bm x, \bm y)$, defined in~\eqref{eqn-error}, is bounded by
\begin{align}
    \e\qty(\bm x, \bm y)
    &= f(\bm x) - \phi(\bm y) \\
    &\le f_{\mu_2}(\bm x) - \phi_{\mu_1}(\bm y) + \mu_1D_1 + \mu_2D_2 \\
    &\le \mu_1 D_1 + \mu_2 D_2,
\end{align}
where the first inequality follows from~\eqref{eqn-f-f-mu} and~\eqref{eqn-phi-phi-mu}, and the second inequality is given by the excessive gap condition~\eqref{eqn-egc}.
The central concept of EGT is to find $(\bm x, \bm y)$ such that the excessive gap condition is satisfied for small $\mu_1, \mu_2 > 0$ in order to reduce $\e(\bm x, \bm y)$. To achieve this, the EGT algorithm consists of two parts: \textit{initialization} and \textit{shrinking}.

\begin{itemize}
    \item The \textit{initialization} part: Algorithm~\ref{alg-init} requires $\mu_1,\mu_2>0$ satisfying $\mu_1\mu_2 \ge \norm{\bm A}^2/(\sigma_1\sigma_2)$, where 
    \begin{align}
        \norm{\bm A} := 
        \max_{\norm{\bm x}_\X=1}
        \max_{\norm{\bm y}_\Y=1}
        \bm x^\T \bm A \bm y
    \end{align}
    and $d_1, d_2$ must be $\sigma_1, \sigma_2$-strongly convex with respect to some norm $\norm{\cdot}_\X, \norm{\cdot}_\Y$, respectively.
    Then it generates $\qty(\bm x^0,\bm y^0)$ satisfying the excessive gap condition with its input $\mu_1,\mu_2$.

    \item The \textit{shrinking} part: Algorithm~\ref{alg-shrink} requires $\qty(\bm x^k,\bm y^k)$ and $\mu_1,\mu_2>0$ satisfying the excessive gap condition, and the step size $\tau\in(0,1)$ with $\tau^2/(1-\tau) \le \sigma_1\sigma_2\mu_1\mu_2/\norm{\bm A}^2$.
    Then it generates $\qty(\bm x^{k+1},\bm y^{k+1})$ satisfying the excessive gap condition with $(1-\tau)\mu_1,\mu_2$, i.e. we can shrink $\mu_1$ by a factor of $1-\tau$.
    The shrinking algorithm of the $\mu_2$ version can be obtained similarly.
\end{itemize}

By using Algorithms~\ref{alg-init} and~\ref{alg-shrink} with the parameters and choices between $\mu_1$ and $\mu_2$ to shrink presented in~\cite{nesterov2005excessive}, the solution $\qty(\bm x^k,\bm y^k)$ generated by EGT guarantees the following convergence result~\cite[Theorem 6.3]{nesterov2005excessive}:
\begin{align}
    \e(\bm x^k,\bm y^k) \le 
    \frac{4\norm{\bm A}}{k+1}\sqrt{\frac{D_1D_2}{\sigma_1\sigma_2}}.
    \label{eqn-conv}
\end{align}
Therefore, a small $\norm{\bm A}$ and large values for $\sigma_1/D_1, \sigma_2/D_2$ which we call \textit{substantial strongly convexity} of $d_1, d_2$, are required for better convergence, with some norm $\norm{\cdot}_\X, \norm{\cdot}_\Y$.
In this paper, we only consider the $L_1$-norm, which is desirable because $\norm{\bm A} = \max_{i,j} \abs{A_{ij}}$ does not depend on the dimensions of $\X$ and $\Y$.
In practice, we also require that $\nabla d_1^*$ and $\nabla d_2^*$ are easily computable.

\begin{algorithm}[H]
\caption{Initialization}
\label{alg-init}
\begin{algorithmic}[1]
\Require $\mu_1, \mu_2 > 0$ satisfying $\mu_1\mu_2 \ge \norm{\bm A}^2/(\sigma_1\sigma_2)$
\Ensure $\bm x^0, \bm y^0$ satisfying the excessive gap condition~\eqref{eqn-egc} with $\mu_1, \mu_2$
\State $\bar{\bm x} \gets \nabla d_1^*(\bm 0)$
\State $\bm y^0 \gets \nabla d_2^*(\bm A^\T\bar{\bm x} / \mu_2)$
\State $\bm x^0 \gets \nabla d_1^*(\nabla d_1(\bar{\bm x}) - \bm A \bm y^0/\mu_1)$
\end{algorithmic}
\end{algorithm}

\begin{algorithm}[H]
\caption{Shrinking ($\mu_1$ version)}
\label{alg-shrink}
\begin{algorithmic}[1]
\Require
\Statex $\bm x^k, \bm y^k, \mu_1, \mu_2$ satisfying the excessive gap condition~\eqref{eqn-egc}
\Statex $\tau \in (0, 1)$ satisfying $\tau^2/(1-\tau) \le \sigma_1\sigma_2\mu_1\mu_2/\norm{\bm A}^2$
\Ensure
\Statex $\bm x^{k+1}, \bm y^{k+1}$ satisfying the excessive gap condition with $(1-\tau)\mu_1, \mu_2$
\State $\bar{\bm x} \gets \nabla d_1^*(-\bm A \bm y^k / \mu_1)$
\State $\bar{\bm y} \gets \nabla d_2^*(\bm A^\T ((1-\tau) \bm x^k + \tau \bar{\bm x}) / \mu_2)$
\State $\bm y^{k+1} \gets (1-\tau) \bm y^k + \tau \bar{\bm y}$
\State $\bm x^{k+1} \gets (1-\tau) \bm x^k + \tau \nabla d_1^*(\nabla d_1(\bar{\bm x}) - \frac{\tau}{(1-\tau)\mu_1}\bm A\bar{\bm y})$
\end{algorithmic}
\end{algorithm}
\section{Extensive-form games}
\label{sec-efg}

An extensive-form game is a game played by $N$ players, which can be represented with a rooted tree.
A state of the game corresponds to a node of the tree.
Moving down the tree can be done either with the player's actions or stochastic events (like dealing cards, etc.) and hereafter, such stochastic events will be considered as \textit{chance player}'s actions.
The game ends when a leaf of the tree is reached, and player $i=1,\dots, N$ receives a gain $u_i(z)$ corresponding to the terminal node $z$. 
This paper only deals with two-person zero-sum games. 
That is $N=2$ and $u:=-u_1=u_2$, where $u$ is a loss for player 1 and a gain for player 2.

Let $H_i$ denote the set of nodes where player $i=1,2$ acts.
Partitions of $H_i$ describe the imperfect information of the game.
Such a partition $\mathcal{I}_i$ is called an \textit{information partition}, and each element $I\in\mathcal{I}_i$ (i.e. $I\subset H_i$) is called an \textit{information set}. 
Player $i=1,2$ cannot distinguish between nodes belonging to the same information set. 
The information partition must satisfy the following natural constraint: all nodes belonging to the same information set must have equal sets of legal actions.
Also, in this paper, we only consider games satisfying \textit{perfect recall}, i.e., the information partition is consistent with the assumption that each player can remember their own past actions.
See Figures~\ref{fig-kuhn} and~\ref{fig-leduc} in Appendix~\ref{sec-app-games} for an extensive-form game representation of Kuhn poker and Leduc Hold'em, simplified versions of Texas Hold'em.

Assume that each player $i=1,2$ can choose their actions probabilistically at each information set. 
Let $\pi_i(z)$ be the contribution of player $i=1,2,c$ to the probability of reaching the terminal node $z$ from the root, where $c$ means the chance player.
Let $Z$ be the set of terminal nodes, then the expected value of $u$ is given by
\begin{align}
    \sum_{z\in Z} \pi_1(z)\pi_2(z)\pi_c(z)u(z).
\end{align}
Each player $i=1,2$ aims to make this expectation smaller and larger by controlling $\pi_1$ and $\pi_2$, respectively.
Note that $\pi_c$ is constant.

Now consider the feasible region of $\pi_i$.
Let
\begin{align}
    \Sigma_i := \qty{\emp} \cup \qty{
    (I, a) \mid I\in\mathcal{I}_i, a\in\A(I)
    },
\end{align}
where $\A(I)\ne\emptyset$ is the set of legal actions at information set $I$.
Furthermore, by the assumption of \textit{perfect recall}, we can define the \textit{parent function} $p_i\colon\mathcal{I}_i\to\Sigma_i$ such that $p_i(I) = (I^\prime, a)$ if and only if $I^\prime\in\mathcal{I}_i$ is the last information set visited before $I\in\mathcal{I}_i$ and the action $a\in\A(I^\prime)$ is chosen there, and $p_i(I) = \emp$ if and only if there is no player $i$'s information set that comes before $I\in\mathcal{I}_i$.
We see that the function $p_i$ forms a tree.
That is, for a set of vertices $\Sigma_i\cup\mathcal{I}_i$, the graph with edges from $p_i(I)$ to $I$ for all $I\in\mathcal{I}_i$ and from $I$ to $(I, a)$ for all $I\in\mathcal{I}_i$ and $a\in\A(I)$ is a tree rooted at $\emp$.
See Figures~\ref{fig-kuhnq1},~\ref{fig-kuhnq2}, and~\ref{fig-leducq1} for this \textit{information trees} in each game.

Then let us consider the following convex compact set:
\begin{align}
    Q_i := \qty{
    \bm x\in\R_{\ge 0}^\abs{\Sigma_i}
    \,\middle\vert\,
    x_{\emp} = 1, 
    \ x_{p_i(I)} = \sum_{a\in\A(I)} x_{I,a}
    \ \forall I\in\mathcal{I}_i
    },
    \label{eqn-strategy}
\end{align}
which we call the \textit{strategy set} for player $i$.
Consider mapping $\bm x\in Q_i$ to a probabilistic strategy that chooses action $a\in\A(I)$ at information set $I\in\mathcal{I}_i$ with the following probability:
\begin{align}
\begin{cases}
    x_{I,a} / x_{p_i(I)}, & \text{if } x_{p_i(I)} \ne 0, \\
    1/\abs{\A(I)}, & \text{otherwise},
\end{cases}
\end{align}
which is a sufficient representation of the player $i$'s probabilistic strategy.
Furthermore, the contribution $\pi_i(z)$ is given by $x_{p_i(z)}$
where $p_i(z)$ is a natural extension of the parent function.
That is, $p_i(z) = (I, a)$ if and only if $I\in\mathcal{I}_i$ is the last information set visited before $z\in Z$ and the action $a\in\A(I)$ is chosen there, and $p_i(z) = \emp$ if and only if there is no player $i$'s information set before $z\in Z$.

Therefore, extensive-form games can be written as the following BSPP:
\begin{align}
    \min_{\bm x\in Q_1}\max_{\bm y\in Q_2}
    \sum_{z\in Z} x_{p_1(z)} y_{p_2(z)} \pi_c(z)u(z).
\end{align}
As can be seen from this formulation, the matrix $\bm A$ in BSPP~\eqref{eqn-bspp}, which represents an extensive-form game, has only at most $\abs{\Sigma}$ non-zero elements. In other words, $\bm A$ is sparse in most cases.
\section{Prox-function over the strategy set}
\label{sec-prox}
For solving extensive-form games with EGT, we need a prox-function defined on the convex compact set $Q_i$, defined by~\eqref{eqn-strategy}, the strategy set of each player $i=1,2$.
To guarantee good convergence, it is necessary to define a prox-function $d_i\colon Q_i\to\R$ with large \textit{substantial strong convexity} $\sigma_i/D_i$.

For simplicity, we omit $i$ denoting the player in this section.
Let $M_{Q} := \max_{\bm x\in Q} \norm{\bm x}_1$, which represents the scale of the game.
The prox-function proposed in a previous study~\cite{farina2021better} is shown to be $1/(M_{Q}^3\ln\max_{I\in\mathcal{I}}\abs{\A(I)})$-strongly convex substantially with respect to the $L_1$-norm.
We propose a slightly modified version of this prox-function and show that it is $1/(M_{Q}^2\ln\abs{\Sigma})$-strongly convex substantially with respect to the $L_1$-norm, which is a better guarantee for most games.

Now we propose the prox-function $d\colon Q\to\R$ defined by
\begin{align}
    d(\bm x) := x_\emp\ln x_\emp 
    + \sum_{I\in\mathcal{I}}\sum_{a\in\A(I)} 
    \qty(w_I - \sum_{p(I^\prime)=(I,a)} w_{I^\prime})
    \ x_{I,a}\ln x_{I,a},
    \label{eqn-prox}
\end{align}
where $w_I\in\R$ is defined recursively:
\begin{align}
    w_I := 1 + \max_{a\in\A(I)} 
    \sum_{p(I^\prime)=(I,a)} w_{I^\prime}
    \quad \forall I\in\mathcal{I},
    \label{eqn-w}
\end{align}
and its base case is given by $I\in\mathcal{I}$ with $\qty{I^\prime\in\mathcal{I}\mid p(I^\prime) = (I, a)} = \emptyset$ for all $a\in\A(I)$.
We will denote $\ln x$ as the natural logarithm of $x\ge0$ and assume $0\ln0=0$.
Note that \eqref{eqn-prox} does not satisfy $\min_{\bm x\in Q} d(\bm x)=0$, the third condition for prox-function (see Definition~\ref{def-prox}), so $d-\min_{\bm x\in Q}d(\bm x)$ must be used instead.
For simplicity, however, we will treat $d$ as a prox-function in the following.
This is because the additional constant term is not essential; it does not affect the strong convexity or $\nabla d^*$ and only shifts $d^*$ by a constant.

Note that the prox-function proposed in~\cite{farina2021better} is given by
\begin{align}
    d(\bm x)
    + \sum_{I\in\mathcal{I}} w_I x_{p(I)} \ln\abs{\A(I)},
    \label{eqn-farina}
\end{align}
which is shown to have a minimum value zero.
We have eliminated the first-order term in~\eqref{eqn-farina} by neglecting the adjustment of the minimum to zero. 
As a result, we succeeded in giving a better theoretical guarantee of substantial strong convexity.

\begin{theorem}
\label{thm-sigma}
The prox-function~\eqref{eqn-prox} is $1/M_Q$-strongly convex with respect to the $L_1$-norm.
\end{theorem}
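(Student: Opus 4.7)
The plan is to compute the Hessian of $d$ directly and reduce the required inequality to a single application of Cauchy--Schwarz. Because each summand in~\eqref{eqn-prox} involves only one coordinate of $\bm x$, the Hessian $\nabla^2 d(\bm x)$ on $\ri Q$ is diagonal: setting $c_\emp := 1$ and $c_{I,a} := w_I - \sum_{p(I')=(I,a)} w_{I'}$, its diagonal entry at coordinate $\sigma\in\Sigma$ is $c_\sigma/x_\sigma$, so that
\[
\bm\xi^\T \nabla^2 d(\bm x)\, \bm\xi = \sum_{\sigma\in\Sigma} \frac{c_\sigma\, \xi_\sigma^2}{x_\sigma} \quad \text{for all } \bm\xi\in\R^{|\Sigma|}.
\]

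The first key step is to observe that $c_\sigma \ge 1$ for every $\sigma\in\Sigma$. For $\sigma=\emp$ this is by construction; for $\sigma=(I,a)$ it is immediate from the recursion~\eqref{eqn-w}, because $w_I$ is defined as $1$ plus the \emph{maximum} over $a$ of $\sum_{p(I')=(I,a)} w_{I'}$, which dominates the value at any specific $a$. This is really the point of the $+1$ in the definition of $w_I$ and is what allows the subsequent estimate to come out to $1/M_Q$ rather than something weaker.

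The second step is a weighted Cauchy--Schwarz inequality: the identity $|\xi_\sigma| = \sqrt{c_\sigma \xi_\sigma^2/x_\sigma}\cdot\sqrt{x_\sigma/c_\sigma}$ gives
\[
\|\bm\xi\|_1^2 \le \left( \sum_{\sigma\in\Sigma} \frac{c_\sigma \xi_\sigma^2}{x_\sigma} \right)\left( \sum_{\sigma\in\Sigma} \frac{x_\sigma}{c_\sigma} \right),
\]
and the second factor on the right is at most $\sum_\sigma x_\sigma = \|\bm x\|_1 \le M_Q$ thanks to $c_\sigma \ge 1$ and $x_\sigma \ge 0$, yielding the desired bound $\bm\xi^\T\nabla^2 d(\bm x)\bm\xi \ge \|\bm\xi\|_1^2/M_Q$. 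I do not anticipate any real obstacle: the whole argument rests on the choice of weights $w_I$, which were engineered precisely so that $c_\sigma \ge 1$, after which the proof is essentially one line of Cauchy--Schwarz together with the trivial estimate $\|\bm x\|_1 \le M_Q$. The only mild technicality is that the diagonal entries $c_\sigma/x_\sigma$ require $x_\sigma>0$, but this is exactly the setting of $\ri Q$ where Definition~\ref{def-prox} asks the inequality to hold.
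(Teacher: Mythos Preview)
Your argument is correct and essentially identical to the paper's: both compute the diagonal Hessian, use the recursion~\eqref{eqn-w} to get $c_\sigma\ge 1$, and apply Cauchy--Schwarz together with $\|\bm x\|_1\le M_Q$. The only cosmetic difference is the order of steps---the paper first drops the coefficients via $c_\sigma\ge 1$ and then applies Cauchy--Schwarz to $\sum_\sigma \xi_\sigma^2/x_\sigma$, whereas you keep the $c_\sigma$ in the Cauchy--Schwarz step and bound $\sum_\sigma x_\sigma/c_\sigma$ afterward; the content is the same.
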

\begin{proof}
This proof is the same as the proof of~\cite[Theorem 5]{farina2021better}.
For $\bm\xi\in\R^\abs{\Sigma}$ and $\bm x\in\ri Q$, we have
\begin{align}
    \bm\xi^\T\nabla^2 d(\bm x)\bm\xi
    &= \frac{\qty(\xi_\emp)^2}{x_\emp} +
    \sum_{I\in\mathcal{I}}\sum_{a\in\A(I)}\qty(
    w_I - \sum_{p(I^\prime)=(I,a)} w_{I^\prime}
    ) \frac{\qty(\xi_{I,a})^2}{x_{I,a}} \\
    &\ge \frac{\qty(\xi_\emp)^2}{x_\emp} +
    \sum_{I\in\mathcal{I}}\sum_{a\in\A(I)} \frac{\qty(\xi_{I,a})^2}{x_{I,a}} \\
    &= \sum_{j\in\Sigma} \frac{\qty(\xi_j)^2}{x_j} \\
    &\ge \frac{
    \qty(\sum_{j\in\Sigma}\abs{\xi_j})^2
    }{\sum_{j\in\Sigma} x_j} \\
    &= \frac{\norm{\bm\xi}_1^2}{\norm{\bm x}_1} 
    \ge \frac{\norm{\bm\xi}_1^2}{M_Q},
\end{align}
where the first equality follows from~\eqref{eqn-prox}, the second inequality comes from~\eqref{eqn-w}, and the fourth inequality is true from the Cauchy-Schwarz inequality:
\begin{align}
    \sum_{j\in\Sigma} \qty(\sqrt{x_j})^2
    \cdot
    \sum_{j\in\Sigma} \qty(\frac{\abs{\xi_j}}{\sqrt{x_j}})^2
    \ge
    \qty(\sum_{j\in\Sigma} \sqrt{x_j}\cdot\frac{\abs{\xi_j}}{\sqrt{x_j}})^2.
\end{align}
\end{proof}

To consider the properties of the conjugate function of the prox-function $d$, we present the following corollary.
\begin{lemma}
\label{lem-prox-another}
The prox-function~\eqref{eqn-prox} satisfies the following equation for $\bm x \in \ri Q$:
\begin{align}
    d(\bm x) 
    &= \sum_{I\in\mathcal{I}}
    w_I x_{p(I)}
    \sum_{a\in\A(I)} 
    \frac{x_{I,a}}{x_{p(I)}}
    \ln\frac{x_{I,a}}{x_{p(I)}}.
    \label{eqn-prox-another}
\end{align}
\end{lemma}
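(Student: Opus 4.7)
The plan is to prove this by direct expansion of the right-hand side, using the flow constraints defining $Q$ and the recursive structure of $p$.

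First, I would split the logarithm on the RHS via $\ln(x_{I,a}/x_{p(I)}) = \ln x_{I,a} - \ln x_{p(I)}$. The positive part becomes $\sum_{I\in\mathcal{I}} w_I \sum_{a\in\A(I)} x_{I,a}\ln x_{I,a}$. For the negative part, I would pull $\ln x_{p(I)}$ outside the inner sum and use the linear constraint $\sum_{a\in\A(I)} x_{I,a} = x_{p(I)}$ from~\eqref{eqn-strategy}, yielding $\sum_{I\in\mathcal{I}} w_I \, x_{p(I)}\ln x_{p(I)}$.

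Next, I would reindex this last sum by the children. Every $I\in\mathcal{I}$ satisfies either $p(I)=\emp$ or $p(I)=(I',a)$ for a unique $(I',a)$ with $I'\in\mathcal{I}$, $a\in\A(I')$. Grouping accordingly,
\begin{align}
\sum_{I\in\mathcal{I}} w_I\, x_{p(I)}\ln x_{p(I)}
= \Bigl(\sum_{I:\,p(I)=\emp} w_I\Bigr) x_\emp\ln x_\emp
+ \sum_{I'\in\mathcal{I}}\sum_{a\in\A(I')}
\Bigl(\sum_{p(I)=(I',a)} w_I\Bigr) x_{I',a}\ln x_{I',a}.
\end{align}
Since $\bm x\in\ri Q$ forces $x_\emp=1$, the first term vanishes, so the whole RHS collapses to
\begin{align}
\sum_{I\in\mathcal{I}}\sum_{a\in\A(I)}
\Bigl(w_I - \sum_{p(I')=(I,a)} w_{I'}\Bigr) x_{I,a}\ln x_{I,a},
\end{align}
after relabeling the primed index. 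Comparing with~\eqref{eqn-prox} and again using $x_\emp\ln x_\emp = 0$, this is exactly $d(\bm x)$.

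There is no real obstacle; the argument is pure bookkeeping. The only step that requires a moment of care is the reindexing, where one must observe that the coefficient $\sum_{p(I')=(I,a)} w_{I'}$ appearing in the definition of $d$ is precisely what arises when one sums $w_I$ over all information sets $I$ sharing a common parent sequence $(I,a)$. I would emphasize that both the sum defining $d$ and the expansion of the RHS share this indexing, which is why the coefficients match term by term.
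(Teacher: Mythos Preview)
Your proof is correct and uses essentially the same ingredients as the paper's proof: the flow constraint $\sum_{a\in\A(I)} x_{I,a}=x_{p(I)}$, the vanishing of $x_\emp\ln x_\emp$, and the reindexing of $\sum_I w_I x_{p(I)}\ln x_{p(I)}$ via the parent map. The only cosmetic difference is direction---the paper starts from the definition~\eqref{eqn-prox} and manipulates toward~\eqref{eqn-prox-another}, whereas you expand~\eqref{eqn-prox-another} and collapse it back to~\eqref{eqn-prox}---so the two arguments are the same computation read in opposite orders.
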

\begin{proof}
First, note that $x_j \ne 0$ for $j\in\Sigma$ for $\bm x\in\ri Q$.
Then for $\bm x\in\ri Q$, we have
\begin{align}
d(\bm x)
&= \sum_{I\in\mathcal{I}}\sum_{a\in\A(I)} 
   \qty(w_I - \sum_{p(I^\prime)=(I,a)} w_{I^\prime})
   \ x_{I,a}\ln x_{I,a} \\
&= 
\qty(
\sum_{I\in\mathcal{I}}\sum_{a\in\A(I)} w_I x_{I,a}\ln x_{I,a}
) - 
\sum_{p(I^\prime)\ne\emp} w_{I^\prime} x_{p(I^\prime)}\ln x_{p(I^\prime)} 
\\
&= 
\sum_{I\in\mathcal{I}}\sum_{a\in\A(I)} w_I x_{I,a}\ln x_{I,a}
- 
\sum_{I\in\mathcal{I}} w_{I} x_{p(I)}\ln x_{p(I)} 
\\
&= \sum_{I\in\mathcal{I}} w_I \qty{
\qty( \sum_{a\in\A(I)} x_{I,a}\ln x_{I,a} )
 - x_{p(I)}\ln x_{p(I)}
} \\
&= \sum_{I\in\mathcal{I}} w_I \qty{
\sum_{a\in\A(I)} x_{I,a}\ln x_{I,a}
- \sum_{a\in\A(I)} x_{I,a}\ln x_{p(I)}
} \\
&= \sum_{I\in\mathcal{I}} w_I \sum_{a\in\A(I)}
x_{I,a}\ln\frac{x_{I,a}}{x_{p(I)}} \\
&= \sum_{I\in\mathcal{I}}
w_I x_{p(I)}
\sum_{a\in\A(I)} 
\frac{x_{I,a}}{x_{p(I)}}
\ln\frac{x_{I,a}}{x_{p(I)}},
\end{align}
where the first and third equalities follow from $x_\emp\ln x_\emp = 1\ln 1 = 0$ for $\bm x\in Q$, and the fifth equality comes from $x_{p(I)}=\sum_{a\in\A(I)} x_{I,a}$ for $\bm x\in Q$.
\end{proof}
From Lemma~\ref{lem-prox-another} and the fact that~\eqref{eqn-prox} is continuous in $Q$, we have
\begin{align}
    d^*(\bm\xi) 
    &= \max_{\bm x\in Q} \qty{
    \bm\xi^\T \bm x - d(\bm x)
    } \\
    &= \sup_{\bm x\in \ri Q} \qty{
    \bm\xi^\T \bm x - 
    \sum_{I\in\mathcal{I}}
    w_I x_{p(I)}
    \sum_{a\in\A(I)} 
    \frac{x_{I,a}}{x_{p(I)}}
    \ln\frac{x_{I,a}}{x_{p(I)}} 
    }.
    \label{eqn-sup-conj}
\end{align}
Now, choose any $I\in\mathcal{I}$ satisfying $\qty{I^\prime\in\mathcal{I}\mid p(I^\prime)=(I,a)}=\emptyset$ for all $a\in\A(I)$, then the terms on $\qty(x_{I,a})_{a\in\A(I)}$ of the supreme~\eqref{eqn-sup-conj} are given by
\begin{align}
    x_{p(I)} 
    \sup_{\bm z\in\ri\Delta_{\abs{\A(I)}}} \qty{
    \sum_{a\in\A(I)} \xi_{I,a} z_a
    - w_I \sum_{a\in\A(I)} z_a\ln z_a
    },
\end{align}
where $z_a := x_{I,a}/x_{p(I)}$ and $\Delta_n$ is the $n$-dimentional simplex.
This maximization subproblem can be solved analytically (see Appendix~\ref{sec-app-subproblem}).
That is, the maximizer is given by
\begin{align}
z_a^* 
:=
\frac{\exp(\xi_{I,a}/w_I)}{\sum_{a^\prime\in\A(I)}\exp(\xi_{I,a^\prime}/w_I)},
\end{align}
which achieves the following supreme:
\begin{align}
    \mathrm{opt}_I
    &:= \sum_{a\in\A(I)}\xi_{I,a} z_a^* - w_I\sum_{a\in\A(I)} z_a^*\ln z_a^* \\
    &= w_I\ln\qty{
    \sum_{a\in\A(I)}\exp(\xi_{I,a}/w_I)
    }.
\end{align}
Substituting this result to~\eqref{eqn-sup-conj}, the terms on $\qty(x_{I,a})_{a\in\A(I)}$ disappear and $\mathrm{opt}_I$ is added to $\xi_{p(I)}$, which is the coefficient of $x_{p(I)}$ in~\eqref{eqn-sup-conj}.

By repeating the above operations in bottom-up order,~\eqref{eqn-sup-conj} can be solved, and the total calculation can be performed in $\order{\abs{\Sigma}}$.
We can also obtain $\nabla d^*(\bm\xi) = \argmax_{\bm x\in Q}\qty{\bm\xi^\T\bm x - d(\bm x)}$, however, only the ratio $\bm z$ is obtained in the above operations.
Then, after solving~\eqref{eqn-sup-conj}, we need to calculate $\nabla d^*(\bm\xi)$ by multiplying $\bm z$ in top-down order.
See Algorithm~\ref{alg-conj-grad} for details.

\begin{algorithm}[H]
\caption{Calculating $d^*(\bm\xi)$ and $\nabla d^*(\bm\xi)$}
\label{alg-conj-grad}
\begin{algorithmic}[1]
\Require $\bm\xi\in\R^\abs{\Sigma}$
\Ensure $y = d^*(\bm\xi),\ \bm z = \nabla d^*(\bm\xi)$
\State $\bm z \gets \bm 0 \in \R^\abs{\Sigma}$
\For{$I\in\mathcal{I}$ in bottom-up order}
    \For{$a\in\A(I)$}
        \State $\xi_{I,a} \gets \exp(\xi_{I,a} / w_I)$
    \EndFor
    \State $\xi_{p(I)} \gets \xi_{p(I)} + w_I\ln\sum_{a\in\A(I)} \xi_{I,a}$
    \For{$a\in\A(I)$}
        \State $z_{I,a} \gets \xi_{I,a} / \sum_{a\in\A(I)} \xi_{I,a}$
    \EndFor
\EndFor
\State $y \gets \xi_\emp$
\State $z_\emp \gets 1$
\For{$I\in\mathcal{I}$ in top-down order}
    \For{$a\in\A(I)$}
        \State $z_{I,a} \gets z_{p(I)} z_{I,a}$
    \EndFor
\EndFor
\end{algorithmic}
\end{algorithm}

\begin{theorem}
\label{thm-d}
The prox-function~\eqref{eqn-prox} satisfies
\begin{align}
    \max_{\bm x\in Q} d(\bm x) 
    - \min_{\bm x\in Q} d(\bm x)
    \le M_Q \ln\abs{\Sigma}.
\end{align}
\end{theorem}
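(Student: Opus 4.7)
The plan is to bound the oscillation of $d$ via a telescoping identity combined with Jensen's inequality.

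First, on $Q$ we have $x_\emp = 1$, so $d(\bm x) = \sum_{I\in\mathcal{I},\,a\in\A(I)} c_{I,a}\,x_{I,a}\ln x_{I,a}$ with $c_{I,a} := w_I - \sum_{p(I')=(I,a)} w_{I'} \ge 1$. The equality constraints defining $Q$ force $x_{I,a}\le x_{p(I)}\le\cdots\le x_\emp = 1$, so every term $c_{I,a}x_{I,a}\ln x_{I,a}\le 0$, with equality at the vertices of $Q$ (pure strategies). Hence $\max_{\bm x\in Q} d(\bm x) = 0$, and the remaining task is to upper-bound $-d(\bm x)$.

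Second, plugging $\sum_a x_{I,a} = x_{p(I)}$ into $c_{I,a}$ gives $\sum_a c_{I,a}\,x_{I,a} = w_I x_{p(I)} - \sum_{I'\text{ child of }I} w_{I'}\,x_{p(I')}$; summing over $I$ telescopes along the information tree and leaves
\begin{align}
    \sum_{(I,a)} c_{I,a}\,x_{I,a} \;=\; \sum_{I:\,p(I)=\emp} w_I \;=:\; K_0,
\end{align}
which is independent of $\bm x\in Q$. Moreover $K_0 \le M_Q$, because a pure strategy simultaneously attaining the maxima of the $w_I$'s on the root subtrees yields a vertex of $Q$ with $\|\bm x\|_1 = 1 + K_0$. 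The same recursion $w_{I}\ge 1 + w_{I'}$ for each child $I'$ of $I$ implies $w_I \le K_0$ for \emph{every} $I\in\mathcal{I}$.

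Third, since $K_0$ is a positive constant on $Q$, the family $\{c_{I,a}x_{I,a}/K_0\}_{(I,a)}$ is a probability distribution on the pairs $(I,a)$, and Jensen's inequality applied to the concave logarithm gives
\begin{align}
    -d(\bm x) \;=\; K_0\sum_{(I,a)} \frac{c_{I,a}x_{I,a}}{K_0}\ln\frac{1}{x_{I,a}} \;\le\; K_0\,\ln\!\left(\frac{1}{K_0}\sum_{(I,a)} c_{I,a}\right).
\end{align}
Using $c_{I,a}\le w_I\le K_0$ together with $|\{(I,a):I\in\mathcal{I},a\in\A(I)\}|=|\Sigma|-1$, we get $\sum_{(I,a)} c_{I,a}\le K_0(|\Sigma|-1)$, so $-d(\bm x)\le K_0\ln(|\Sigma|-1)\le M_Q\ln|\Sigma|$. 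Combined with $\max_{\bm x\in Q} d(\bm x) = 0$ this yields the theorem. The main obstacle is spotting the telescoping identity: it is exactly the normalization that turns $-d$ into an expectation of $\ln(1/x_{I,a})$ against a probability distribution on $(I,a)$, after which Jensen's inequality and the crude counting bound on $\sum_{(I,a)} c_{I,a}$ close the argument.
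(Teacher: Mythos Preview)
Your argument is correct and genuinely different from the paper's. The paper bounds $-\min_{\bm x\in Q} d(\bm x)=d^*(\bm 0)$ by exploiting the recursive formula for $d^*$: it defines $\mathrm{opt}_I$ (the optimal value of the subproblem at $I$) and an auxiliary count $\gamma_I$, proves by induction on the information tree that $\mathrm{opt}_I\le w_I\ln\gamma_I$, and then sums over the root information sets. You instead observe that the constraint $\sum_a x_{I,a}=x_{p(I)}$ makes $\sum_{(I,a)} c_{I,a}x_{I,a}$ telescope to the constant $K_0=\sum_{p(I)=\emp} w_I$, turning $-d(\bm x)$ into $K_0$ times an average of $\ln(1/x_{I,a})$; a single application of Jensen and the crude bound $c_{I,a}\le w_I\le K_0$ then finishes. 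Your route avoids the recursive analysis of $d^*$ entirely and even yields the marginally sharper constant $(M_Q-1)\ln(|\Sigma|-1)$, since in fact $K_0=M_Q-1$. The paper's approach, on the other hand, ties the bound directly to the algorithm for computing $\nabla d^*$, which is pedagogically useful in context. One small point worth making explicit: when some $x_{I,a}=0$ the corresponding term contributes zero to $-d(\bm x)$ and zero weight in your probability measure, so Jensen applies to the remaining (positive) terms and the bound $\sum c_{I,a}$ only improves.
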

\begin{proof}
Since $Q\subset[0,1]^\abs{\Sigma}$ and $x\ln x \le 0$ for $x\in[0,1]$, we have $\max_{\bm x\in Q} d(\bm x) \le 0$.
Then it is sufficient to show $-\min_{\bm x\in Q} d(\bm x) = d^*(\bm 0) \le M_Q\ln\abs{\Sigma}$.
From the procedure for computing $d^*(\bm\xi)$ presented above, we see that $d^*(\bm 0)$ satisfies the following recursive equation:
\begin{align}
    d^*(\bm 0) 
    &= \sum_{p(I)=\emp} \mathrm{opt}_I \\
    \mathrm{opt}_I
    &= w_I\ln \qty{\sum_{a\in\A(I)} \exp\qty(
    \frac{\sum_{p(I^\prime)=(I,a)} 
    \mathrm{opt}_{I^\prime}}{w_I}
    )
    } \quad \forall I\in\mathcal{I} \label{eqn-opt-i}
\end{align}
Now define $\gamma_I\in\R$ recursively:
\begin{align}
    \gamma_I := \abs{\A(I)} 
    + \sum_{a\in\A(I)}\sum_{p(I^\prime)=(I,a)}
    \gamma_{I^\prime} \quad \forall I\in\mathcal{I},
    \label{eqn-gamma}
\end{align}
then let us show
\begin{align}
    \mathrm{opt}_I \le w_I\ln\gamma_I
    \quad \forall I\in\mathcal{I}
    \label{eqn-opt-ineq}
\end{align}
recursively.
Assume that~\eqref{eqn-opt-ineq} holds for $I^\prime$ which is below $I$ in the sense of the rooted tree, discussed in Section~\ref{sec-efg}.
Then, we have
\begin{align}
    \sum_{p(I^\prime)=(I,a)}
    \mathrm{opt}_{I^\prime}
    &\le \sum_{p(I^\prime)=(I,a)}
    w_{I^\prime}\ln\gamma_{I^\prime} \\
    &\le \qty(
    1 + \sum_{p(I^\prime)=(I,a)} w_{I^\prime}
    )\ln\qty(
    1 + \sum_{p(I^\prime)=(I,a)}\gamma_{I^\prime}
    ) \\
    &\le w_I\ln\qty(
    1 + \sum_{p(I^\prime)=(I,a)}\gamma_{I^\prime}
    ),
\end{align}
where the first inequality follows from the assumption, and the third inequality comes from~\eqref{eqn-w}.
By substituting this to~\eqref{eqn-opt-i}, we have
\begin{align}
    \mathrm{opt}_I
    &\le w_I\ln\qty{
    \sum_{a\in\A(I)} \qty(
    1 + \sum_{p(I^\prime)=(I,a)}
    \gamma_{I^\prime}
    )
    } \\
    &= w_I\ln\gamma_I.
\end{align}
Therefore~\eqref{eqn-opt-ineq} is shown for all $I\in\mathcal{I}$.
By the definition~\eqref{eqn-gamma}, we have
\begin{align}
    1 + \sum_{p(I)=\emp}\gamma_I
    = 1 + \sum_{I\in\mathcal{I}} \abs{\A(I)} = \abs{\Sigma}.
\end{align}
By the definition of $w_I$, we also have
\begin{align}
    1 + \sum_{p(I)=\emp} w_I
    = \max_{\bm x\in Q} \norm{\bm x}_1 = M_Q.
\end{align}
Finally, we obtained the following inequality that we wanted to show.
\begin{align}
    d^*(\bm 0) 
    &= \sum_{p(I)=\emp}\mathrm{opt}(I) \\
    &\le \sum_{p(I)=\emp}w_I\ln\gamma_I \\
    &\le \qty(
    1 + \sum_{p(I)=\emp}w_I
    )
    \ln\qty(
    1 + \sum_{p(I)=\emp}\gamma_I
    ) \\
    &= M_Q\ln\abs{\Sigma}.
\end{align}
\end{proof}

We conclude this section by proving the following theorem.
\begin{theorem}
    The prox-function $d$ defined by~\eqref{eqn-prox} is $1/(M_Q^2\ln\abs{\Sigma})$-strongly convex substantially, with respect to the $L_1$-norm.
    In other words, assume that $d$ is $\sigma$-strongly convex with respect to the $L_1$-norm, and let $D := \max_{\bm x\in Q} d(\bm x) - \min_{\bm x\in Q} d(\bm x)$, then the following inequality holds:
    \begin{align}
        \frac{D}{\sigma} \le M_Q^2\ln\abs{\Sigma}.
    \end{align}
\end{theorem}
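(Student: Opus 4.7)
The plan is to observe that this final theorem is essentially a direct corollary of the two preceding results, Theorem~\ref{thm-sigma} and Theorem~\ref{thm-d}, once the definitions of $\sigma$ and $D$ are unpacked. The substantial strong convexity parameter $D/\sigma$ factors naturally into the strong convexity constant in the denominator and the diameter-like quantity in the numerator, and each has already been bounded separately.

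First I would invoke Theorem~\ref{thm-sigma}, which states that the prox-function $d$ defined by~\eqref{eqn-prox} is $1/M_Q$-strongly convex with respect to the $L_1$-norm. Since $\sigma$ is assumed to be the strong convexity modulus of $d$ (with respect to $\norm{\cdot}_1$), this gives $\sigma \ge 1/M_Q$, and equivalently $1/\sigma \le M_Q$. Next I would apply Theorem~\ref{thm-d}, which shows directly that
\begin{align}
    D = \max_{\bm x\in Q} d(\bm x) - \min_{\bm x\in Q} d(\bm x) \le M_Q \ln\abs{\Sigma}.
\end{align}

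Finally, multiplying these two bounds yields
\begin{align}
    \frac{D}{\sigma} \le M_Q \cdot M_Q \ln\abs{\Sigma} = M_Q^2 \ln\abs{\Sigma},
\end{align}
which is exactly the desired inequality. There is no real obstacle here; the theorem is a one-line consequence of the earlier two, and the content of the paper's contribution already lies in Theorems~\ref{thm-sigma} and~\ref{thm-d}. The only minor subtlety worth flagging is that $d$ as defined in~\eqref{eqn-prox} does not itself satisfy the normalization $\min_{\bm x\in Q} d(\bm x) = 0$ required by Definition~\ref{def-prox}, but as the authors already noted this is immaterial because shifting by a constant affects neither the strong convexity parameter $\sigma$ nor the diameter $D$ defined here as a difference of extrema.
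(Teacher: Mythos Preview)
Your proposal is correct and follows exactly the same approach as the paper: the authors' proof consists of the single line ``It follows from Theorem~\ref{thm-sigma} and Theorem~\ref{thm-d},'' which is precisely the combination you spell out. Your additional remark about the normalization of $d$ is accurate and already acknowledged earlier in the paper.
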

\begin{proof}
    It follows from Theorem~\ref{thm-sigma} and Theorem~\ref{thm-d}.
\end{proof}
\section{Numerical experiments}
\label{sec-experiments}
In this section, we report the results of solving extensive-form games by using the prox-function \eqref{eqn-prox} with EGT.
As toy instances, we used three games, Kuhn poker, Leduc Hold'em (3 ranks), and Leduc Hold'em (13 ranks), which we explain in detail in Appendix \ref{sec-app-games}.
All implementations used in the experiments are available on~\url{https://github.com/habara-k/egt-on-efg}.

\subsection{Heuristics to accelerate the convergence of EGT}
The parameters and choices between $\mu_1$ and $\mu_2$ to shrink proposed in~\cite{nesterov2005excessive} guarantee the convergence of \eqref{eqn-conv} but are very conservative, then heuristic-based parameter selection will perform better in most cases.
First, we always use the following heuristics in~\cite{farina2021better}:
\begin{itemize}
    \item To start with small $\mu_1$ and $\mu_2$, we call the initializing algorithm \ref{alg-init} with $\mu_1=\mu_2=10^{-6}$.
    Increase $\mu_1$ and $\mu_2$ by 20\% until the output satisfies the excessive gap condition.
    \item In each step, shrink the larger between $\mu_1$ and $\mu_2$.
    \item To obtain a large step size, we call the shrinking algorithm \ref{alg-shrink} with the global $\tau$, which is initially set to 0.5 and is decreased by 50\% while the output does not satisfy the excessive gap condition.
\end{itemize}
See Algorithm \ref{alg-egt} for details.
In addition to these heuristics, this paper proposes a \textit{centering trick}, which is still not considered in the related literature to the best of our knowledge. 
For $\bm x^\prime\in \ri Q$, we can define the following prox-function:
\begin{align}
    \tilde{d}(\bm x; \bm x^\prime) :=
    d(\bm x) - d(\bm x^\prime) - \nabla d(\bm x^\prime)^\T \qty(
    \bm x - \bm x^\prime
    ),
    \label{eqn-prox-z}
\end{align}
which we call $\bm x^\prime$\textit{-centered} prox-function because $\argmin_{\bm x\in Q} \tilde{d}(\bm x; \bm x^\prime) = \bm x^\prime$ holds.
The centering trick uses $\tilde{d}(\bm x;\bm x^\prime)$ as a prox-function for EGT, where $\bm x^\prime$ is a solution obtained by some other method.
The reason for using this centered prox-function is to improve the accuracy of the smoothing approximation $f_{\mu_2}(\phi_{\mu_1})$ in EGT.
In fact, when the prox-function $d_2(d_1)$ takes the minimum value 0 in the optimal solution $y^*(x^*)$ of BSPP, the minimization of $f$ and $f_{\mu_2}$ (the maximization of $\phi$ and $\phi_{\mu_1}$) are equivalent, for any $\mu_2(\mu_1) > 0$.
In addition, we show that
\begin{align}
     \nabla \tilde{d}^*(\bm\xi; \bm x^\prime) 
     := \argmax_{\bm x\in Q} \qty{\bm\xi^\T \bm x - \tilde{d}(\bm x; \bm x^\prime)} 
     = \nabla d^*(\bm\xi + \nabla d(\bm x^\prime)),
\end{align}
so the cost required for the calculation is also $\order{\abs{\Sigma}}$.
Numerical experiments evaluate the performance of the centering trick.

\subsection{Results of the experiments}
Five methods are used: CFR~\cite{zinkevich2007regret}, CFR+~\cite{tammelin2014solving}, EGT, EGT-centering, and EGT-centering with CFR+.
EGT uses the prox-function defined in \eqref{eqn-prox}.
In EGT-centering, the first EGT is performed in 10\% of the total steps, and the remaining 90\% of the steps are performed in the second EGT using the prox-function centered on the solution obtained from the first EGT. 
EGT-centering with CFR+ is a variant of EGT-centering, using CFR+ instead of the first EGT.

From Figure \ref{fig-experiments},
note that EGT-centering and EGT-centering with CFR+ both use EGT and CFR+ for the first 10\% of iterations, respectively, so only the last 90\% of iterations are drawn.
First, since Kuhn poker is a very small game, EGT, which is unsuitable for larger games, performs similarly to CFR+. EGT with centering tricks using the solutions of these two methods, namely EGT-centering and EGT-centering with CFR+, converge faster than the other methods.
Second, since Leduc Hold'em is a larger game than Kuhn poker, EGT converges worse than CFR+. However, we see that EGT-centering performs as well as CFR+ and that EGT-centering with CFR+ converges better than pure CFR+.

These results show that EGT combined with CFR+ by the centering trick converges faster than pure CFR+ in large games where EGT alone performs worse than CFR+.
In addition, although we could not experiment in this paper, the results suggest that EGT with the centering trick has the potential to further exploit the performance of CFR+ in very large games such as those used in~\cite{bowling2015heads, brown2018superhuman}.
Note that the implementations of all five methods are optimized in the same way, resulting in EGT (including the centering trick) taking at most 2 to 3 times longer per iteration than CFR (CFR+).
Thus, although the horizontal axis in Figure~\ref{fig-experiments} represents the number of iterations, changing this to the running time (computational cost) yields almost the same result.

\begin{figure}[htbp]
    \begin{minipage}[b]{\columnwidth}
        \centering
        \includegraphics[width=120mm]{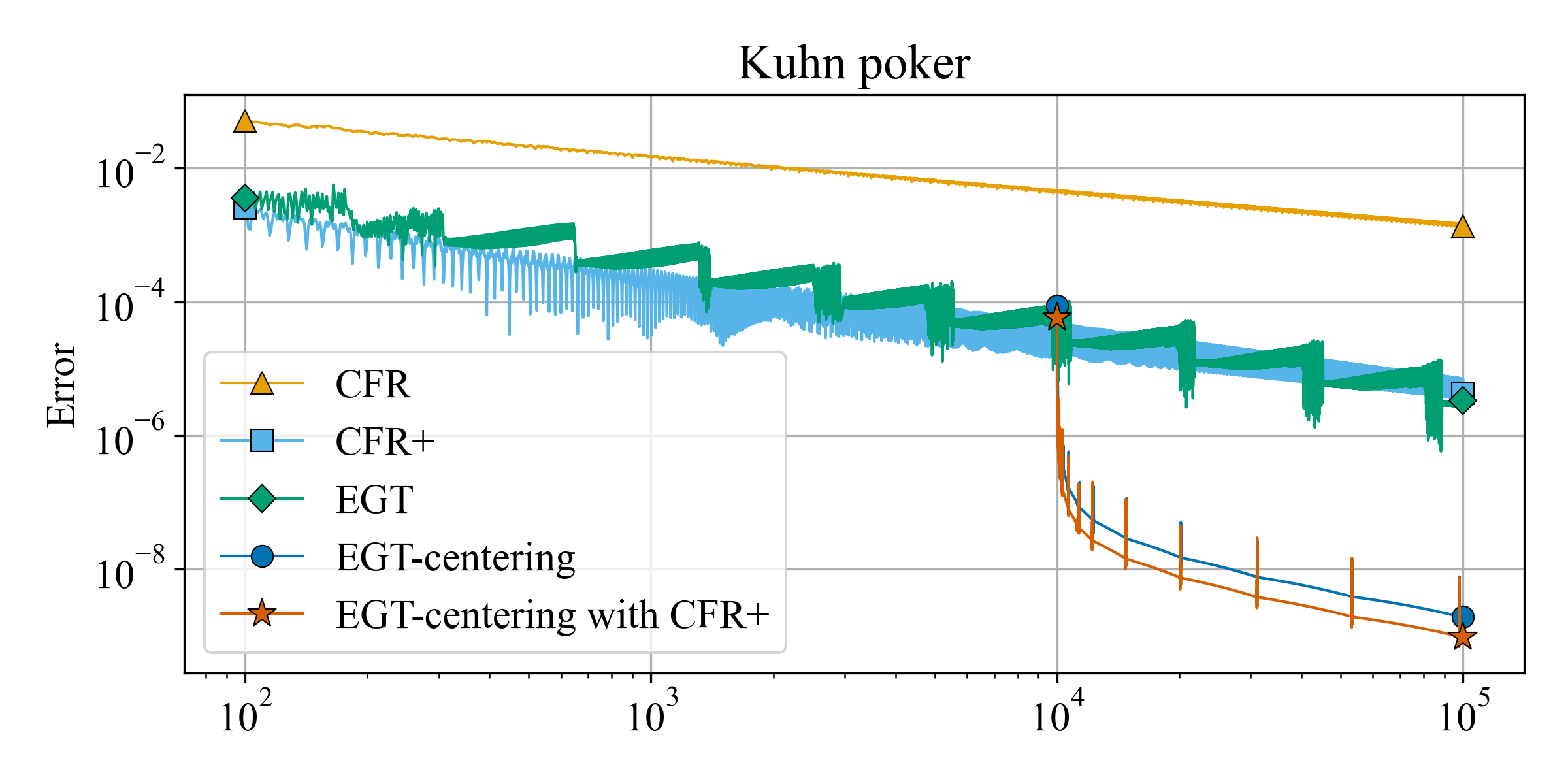}
    \end{minipage}
    \begin{minipage}[b]{\columnwidth}
        \centering
        \includegraphics[width=120mm]{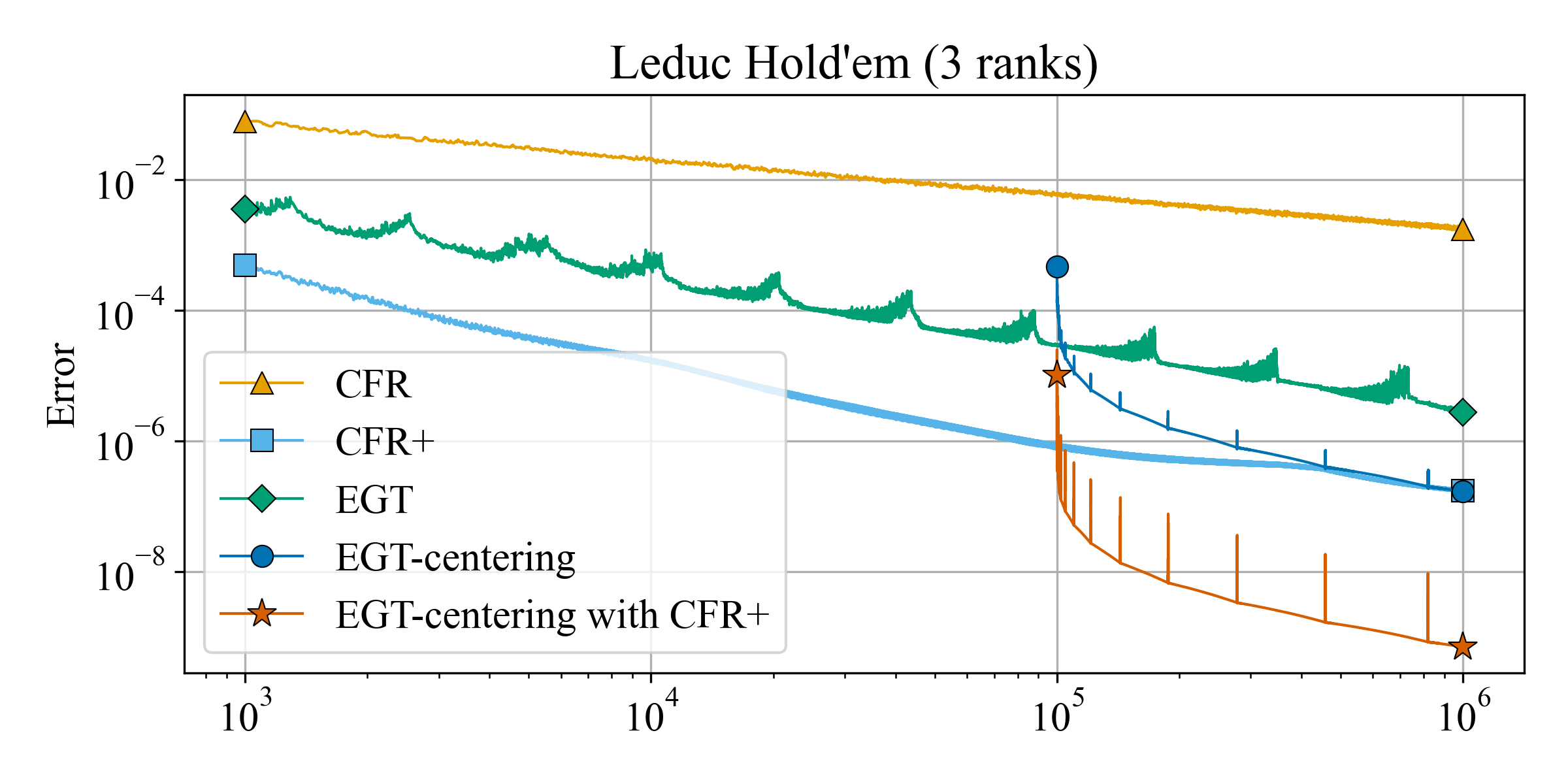}
    \end{minipage}
    \begin{minipage}[b]{\columnwidth}
        \centering
        \includegraphics[width=120mm]{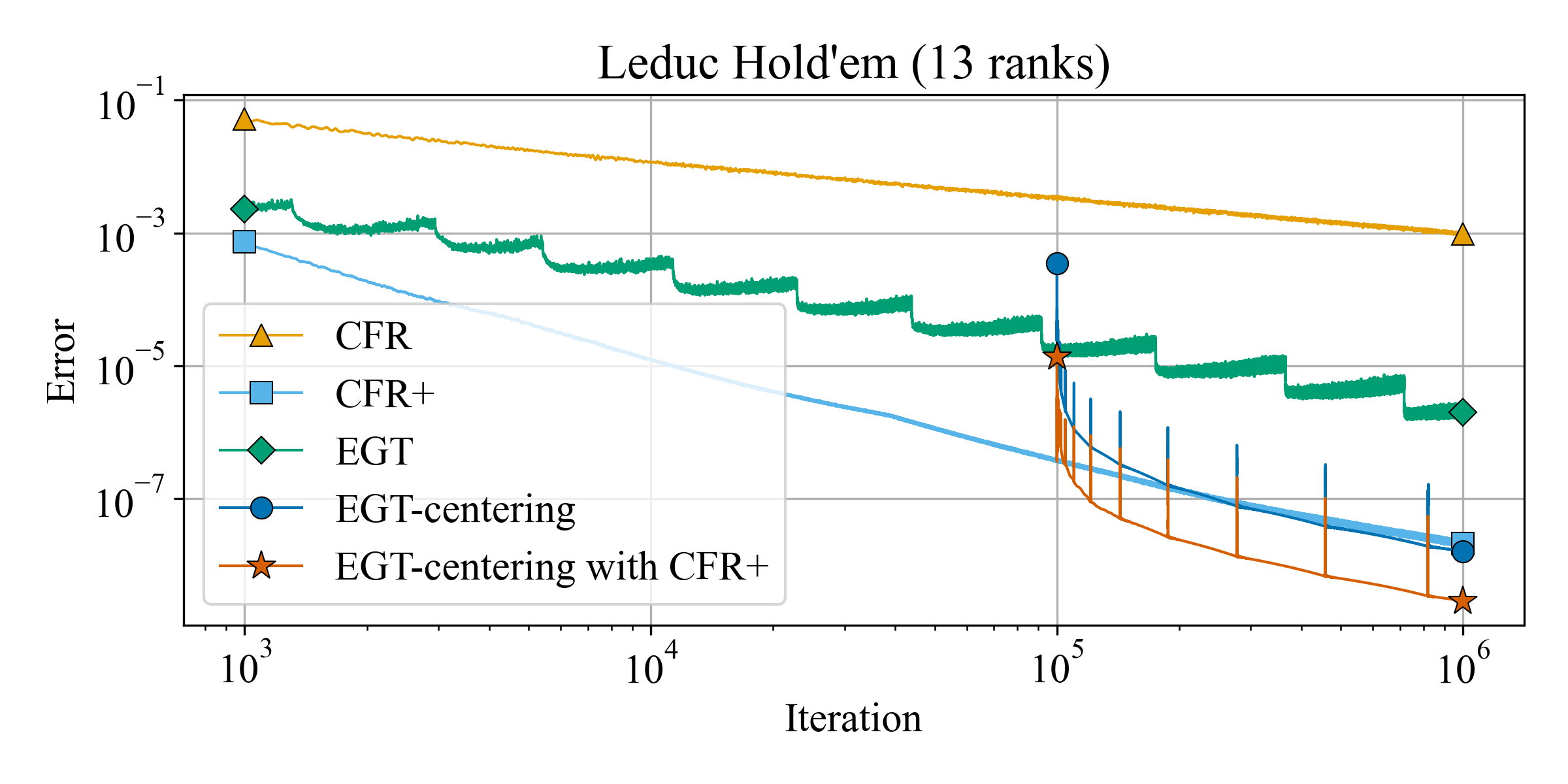}
    \end{minipage}
    \caption{Performance of the five methods for solving three games.}
    \label{fig-experiments}
\end{figure}

\begin{algorithm}[htbp]
\caption{Excessive gap technique (with heuristics in~\cite{farina2021better})}
\label{alg-egt}
\begin{algorithmic}[1]
\Function{init}{\null}
\State $\mu\gets 10^{-6}$
\While{true}
    \State $\bm x, \bm y \gets$ Call algorithm \ref{alg-init} with $\mu_1=\mu_2=\mu$.
    \If{$f_{\mu_2}(\bm x) \le \phi_{\mu_1}(\bm y)$}
        \State \Return $\bm x, \bm y, \mu, \mu$
    \EndIf
    \State $\mu \gets 1.2\mu$
\EndWhile
\EndFunction
\Statex
\Function{decrease$\mu_1$}{$\bm x, \bm y, \mu_1, \mu_2, \tau$}
    \While{true}
        \State $\bar{\bm x},\bar{\bm y} \gets $ Call algorithm \ref{alg-shrink} with $\bm x, \bm y, \mu_1, \mu_2,$ and $\tau$.
        \If{$f_{\mu_2}(\bar{\bm x}) \le \phi_{(1-\tau)\mu_1}(\bar{\bm y})$}
            \State \Return $\bar{\bm x}, \bar{\bm y}, (1-\tau)\mu_1, \tau$
        \EndIf
        \State $\tau \gets 0.5\tau$
    \EndWhile
\EndFunction
\Statex
\Ensure{$\bm x\in Q_1, \bm y\in Q_2$: solutions of EGT}
\State $\bm x, \bm y, \mu_1, \mu_2 \gets $\Call{init}{\null}
\State $\tau \gets 0.5$
\For{$k=1,\dots,T-1$}
    \If{$\mu_1 > \mu_2$}
        \State $\bm x, \bm y, \mu_1, \tau \gets $ \Call{decrease$\mu_1$}{$\bm x, \bm y, \mu_1, \mu_2, \tau$}
    \Else
        \State decrease $\mu_2$ similarly.
    \EndIf
\EndFor
\end{algorithmic}
\end{algorithm}

\section{Conclusions}
\label{sec-conclusions}
To make the smoothing method for solving extensive-form games applicable to large games, we proposed a modified version of the prox-function used internally in the smoothing method to improve its theoretical convergence guarantees.
We also proposed a heuristic to update the prox-function with temporary solutions, which is expected to improve the accuracy of the internal approximation of the smoothing method.
Numerical experiments confirm that the smoothing method applying the heuristic with the solution of CFR+, the state-of-the-art method for large games, converges faster than pure CFR+.
Some future works include confirming the performance of the smoothing method when applied to larger-scale extensive-form games and proving theoretically that the proposed heuristic improves convergence.

\bibliographystyle{unsrt}
\bibliography{bibfile}

\begin{thebibliography}{10}

\bibitem{zinkevich2007regret}
Martin Zinkevich, Michael Johanson, Michael Bowling, and Carmelo Piccione.
\newblock Regret minimization in games with incomplete information.
\newblock {\em Advances in neural information processing systems},
  20:1729--1736, 2007.

\bibitem{nesterov2005excessive}
Yu~Nesterov.
\newblock Excessive gap technique in nonsmooth convex minimization.
\newblock {\em SIAM Journal on Optimization}, 16(1):235--249, 2005.

\bibitem{tammelin2014solving}
Oskari Tammelin.
\newblock Solving large imperfect information games using {CFR}+.
\newblock {\em arXiv preprint arXiv:1407.5042}, 2014.

\bibitem{bowling2015heads}
Michael Bowling, Neil Burch, Michael Johanson, and Oskari Tammelin.
\newblock Heads-up limit hold’em poker is solved.
\newblock {\em Science}, 347(6218):145--149, 2015.

\bibitem{brown2018superhuman}
Noam Brown and Tuomas Sandholm.
\newblock Superhuman {AI} for heads-up no-limit poker: Libratus beats top
  professionals.
\newblock {\em Science}, 359(6374):418--424, 2018.

\bibitem{hoda2010smoothing}
Samid Hoda, Andrew Gilpin, Javier Pena, and Tuomas Sandholm.
\newblock Smoothing techniques for computing {N}ash equilibria of sequential
  games.
\newblock {\em Mathematics of Operations Research}, 35(2):494--512, 2010.

\bibitem{kroer2020faster}
Christian Kroer, Kevin Waugh, Fatma K{\i}l{\i}n{\c{c}}-Karzan, and Tuomas
  Sandholm.
\newblock Faster algorithms for extensive-form game solving via improved
  smoothing functions.
\newblock {\em Mathematical Programming}, 179(1-2):385--417, 2020.

\bibitem{farina2021better}
Gabriele Farina, Christian Kroer, and Tuomas Sandholm.
\newblock Better regularization for sequential decision spaces: Fast
  convergence rates for {N}ash, correlated, and team equilibria.
\newblock {\em arXiv preprint arXiv:2105.12954}, 2021.

\bibitem{kuhn1950simplified}
Harold~W Kuhn.
\newblock A simplified two-person poker.
\newblock {\em Contributions to the Theory of Games}, 1:97--103, 1950.

\bibitem{southey2012bayes}
Finnegan Southey, Michael~P Bowling, Bryce Larson, Carmelo Piccione, Neil
  Burch, Darse Billings, and Chris Rayner.
\newblock Bayes' bluff: Opponent modelling in poker.
\newblock {\em arXiv preprint arXiv:1207.1411}, 2012.

\bibitem{neller2013introduction}
Todd~W Neller and Marc Lanctot.
\newblock An introduction to counterfactual regret minimization.
\newblock In {\em Proceedings of Model AI Assignments, The Fourth Symposium on
  Educational Advances in Artificial Intelligence (EAAI-2013)}, volume~11,
  2013.

\end{thebibliography}

\appendix

\section{Toy instances}
\label{sec-app-games}

In this section, we present some well-known extensive-form games.
These games are used in the numerical experiments in Section~\ref{sec-experiments}.

\subsection{Kuhn poker}
Kuhn poker is a simple variant of Texas Hold'em proposed in~\cite{kuhn1950simplified}, played with three cards, J, Q, and K.
Each player is dealt a card privately, and the game begins with each player betting 1 chip.
Player 1 is the first to act.
Player 1 can either \textit{check} (do nothing and pass the turn to Player 2) or \textit{raise} (bet an additional 1 chip).
\begin{enumerate}[(a)]
    \item If player 1 raises, player 2 chooses to \textit{call} (bet an additional 1 chip) or \textit{fold} (surrender and lose the 1 chip already bet);
    if player 2 calls, the player with the higher-ranked card wins all chips (showdown).
    \label{itm-kuhn}
    \item If player 1 checks, player 2 chooses to check (and showdown) or raise;
    if player 2 raises, player 1 chooses to call or fold as in \ref{itm-kuhn}.
\end{enumerate}
The game tree of Kuhn poker is shown in Figure~\ref{fig-kuhn}, and its information tree of each player is shown in Figures~\ref{fig-kuhnq1} and~\ref{fig-kuhnq2}.

\begin{figure}[tbp]
    \centering
    \includegraphics[width=120mm]{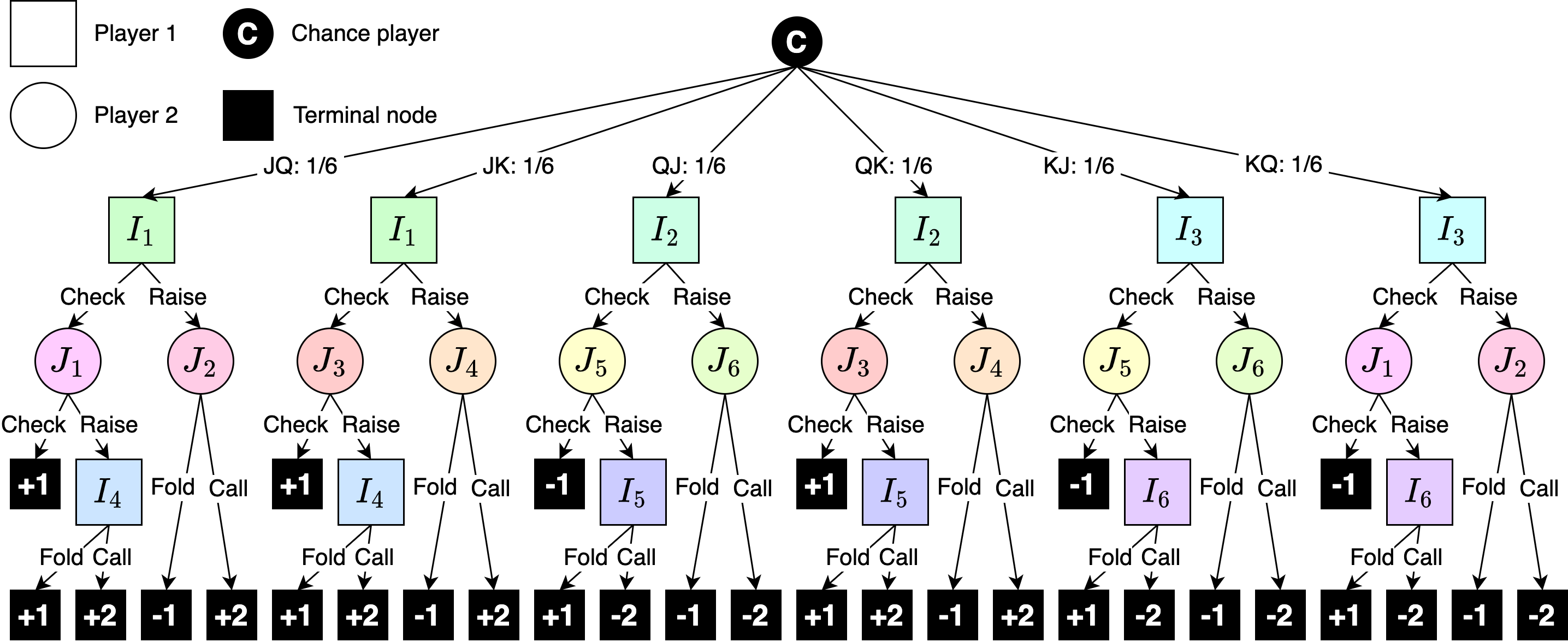}
    \caption{
    Kuhn poker game tree.
    The nodes marked with ``C'' on a black background are played by the chance player, and the square and circle nodes are played by player 1 and player 2.
    The terminal nodes are represented by a black square with corresponding $u(z)$.
    The arrows correspond to actions, and the state moves down to the pointed node.
    Chance player action probabilities are given and are written on the arrows.
    The nodes played by player 1 and player 2 are partitioned according to the information partition $\mathcal{I}_1=\qty{I_1, \dots, I_6}$ and $\mathcal{I}_2=\qty{J_1, \dots, J_6}$. 
    Each player does not know which card is dealt to their opponent, so they cannot distinguish between nodes belonging to the same information partition.
    }
    \label{fig-kuhn}
\end{figure}

\begin{figure}[htbp]
    \centering
    \includegraphics[width=100mm]{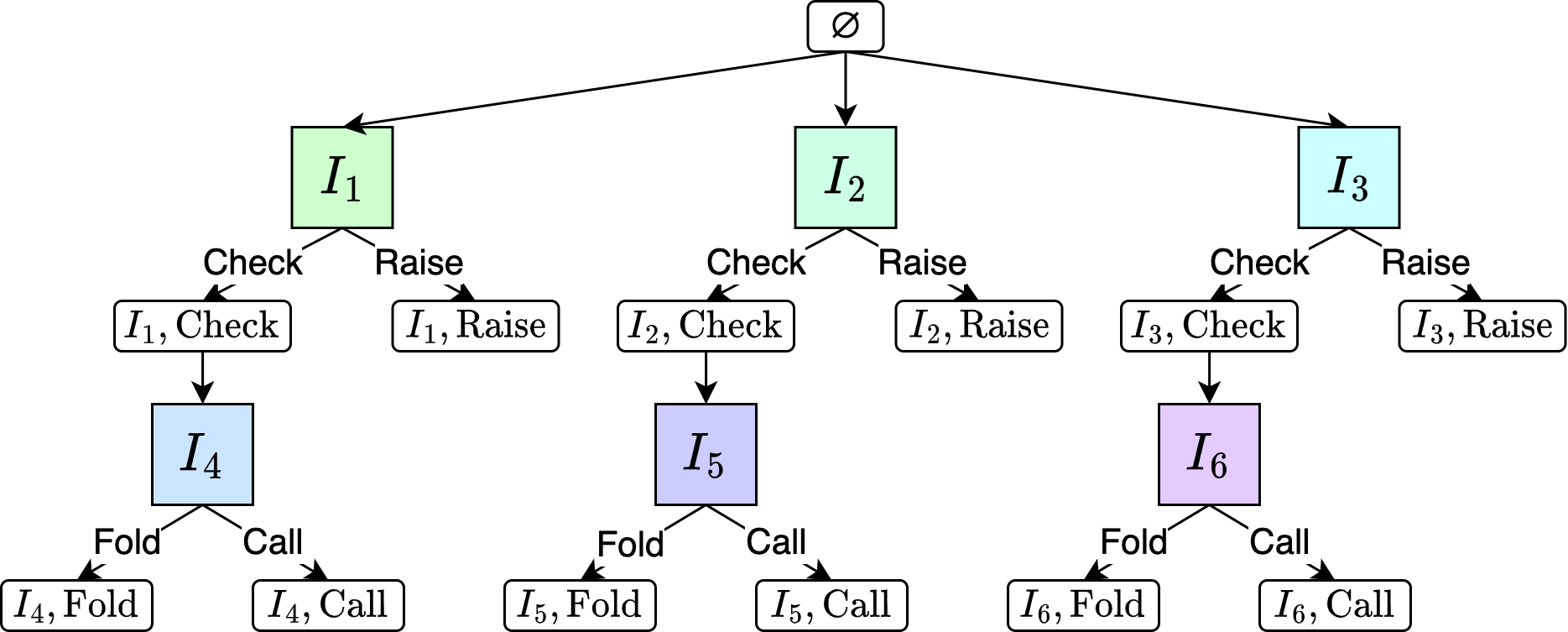}
    \caption{The \textit{information tree} of player 1 in Kuhn poker}
    \label{fig-kuhnq1}
\end{figure}

\begin{figure}[htbp]
    \centering
    \includegraphics[width=120mm]{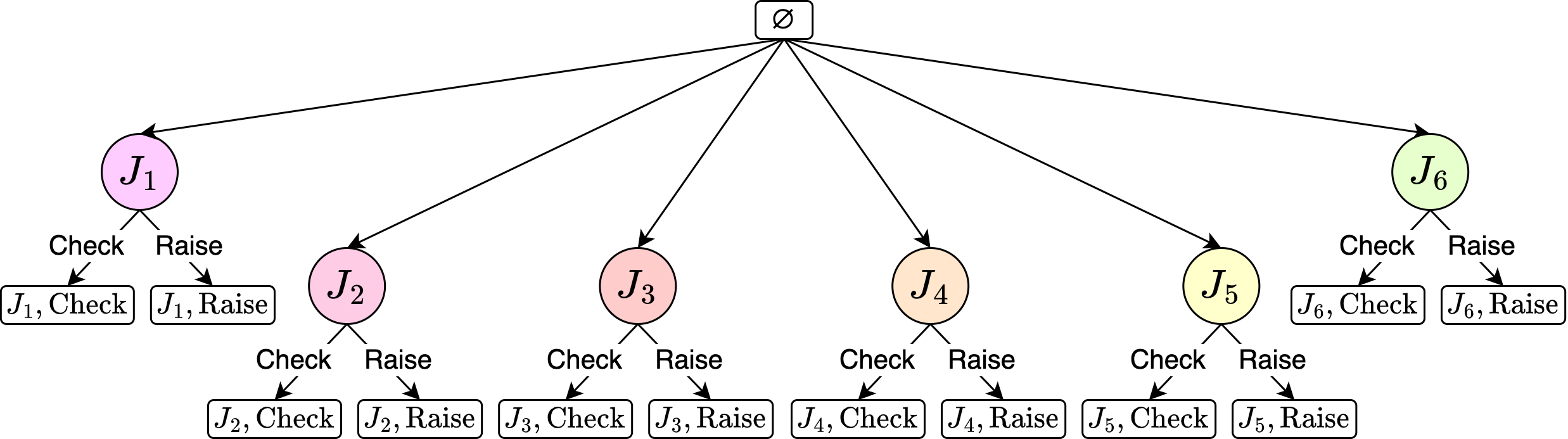}
    \caption{The \textit{information tree} of player 2 in Kuhn poker}
    \label{fig-kuhnq2}
\end{figure}

\subsection{Leduc Hold'em}
Leduc Hold'em is a simple variant of Texas Hold'em proposed in~\cite{southey2012bayes}, played with pairs of cards J, Q, and K, totaling six cards.
The general rules are the same as in Kuhn poker, but some differences exist.
\begin{itemize}
    \item Each player can raise against the opponent's raise, which is called \textit{re-raise}, but cannot raise against a re-raise.
    \item If both players check or one player calls, instead of an immediate showdown, a \textit{community card} is revealed randomly from the remaining cards, and the game is resumed only once more.
    \item The winner of the showdown is the player with the same card as the community card.
    If there is no such player, the player with the higher-ranked card wins.
    If both players have the same cards, the game is tied (chips are divided equally).
    \item The betting amount of raise is 2 and 4 chips in each phase.
    This means that the maximum move of chips is 1+2+2+4+4=13.
\end{itemize}
The game tree of Leduc hold'em is shown in Figure \ref{fig-leduc}, and the information tree of player 1 is shown in Figure \ref{fig-leducq1}.
Experiments are also performed for Leduc Hold'em (13 ranks), a game in which the number of card ranks is changed from 3 to 13.

\begin{figure}[htbp]
    \centering
    \includegraphics[width=125mm]{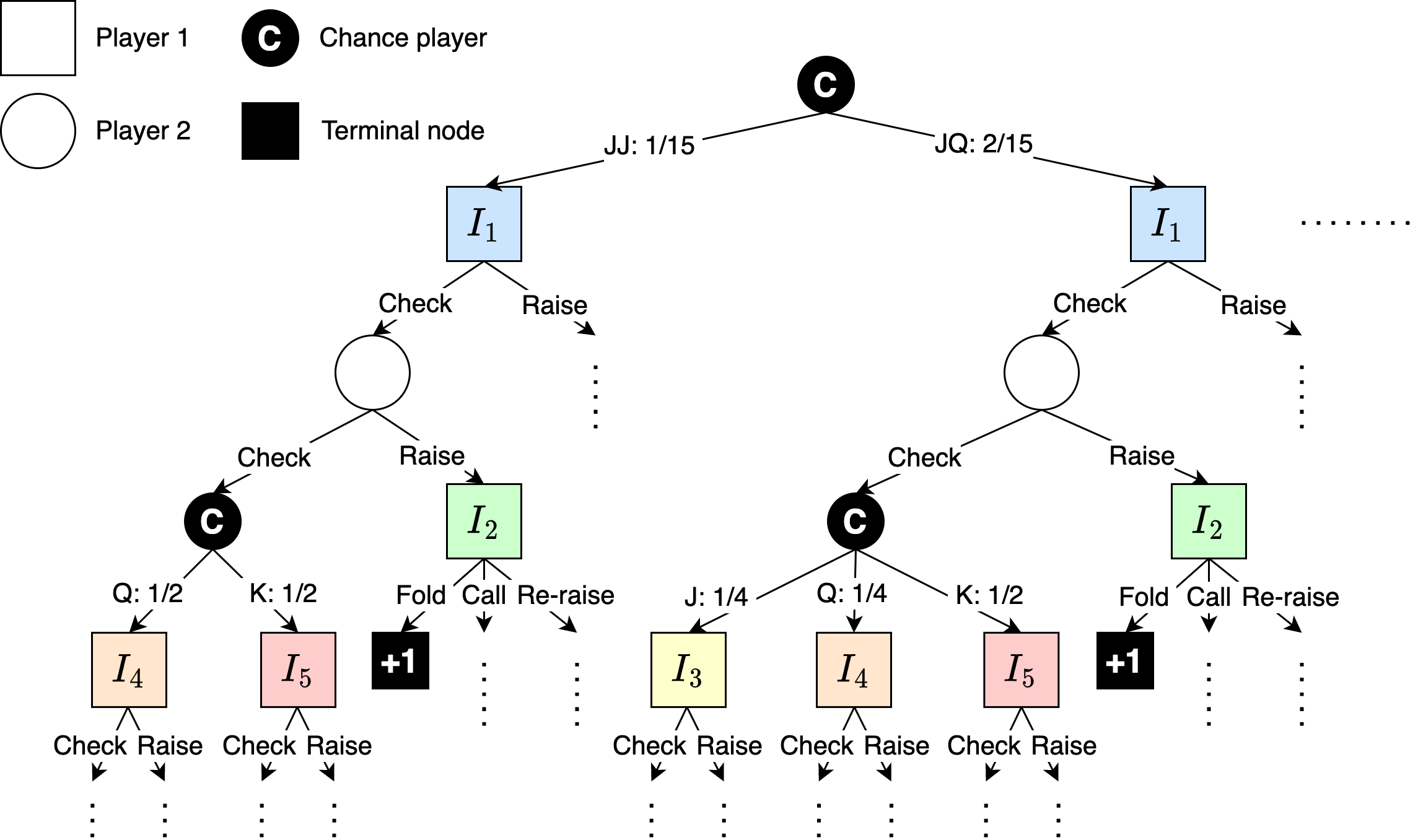}
    \caption{Leduc Hold'em game tree}
    \label{fig-leduc}
\end{figure}

\begin{figure}[htbp]
    \centering
    \includegraphics[width=120mm]{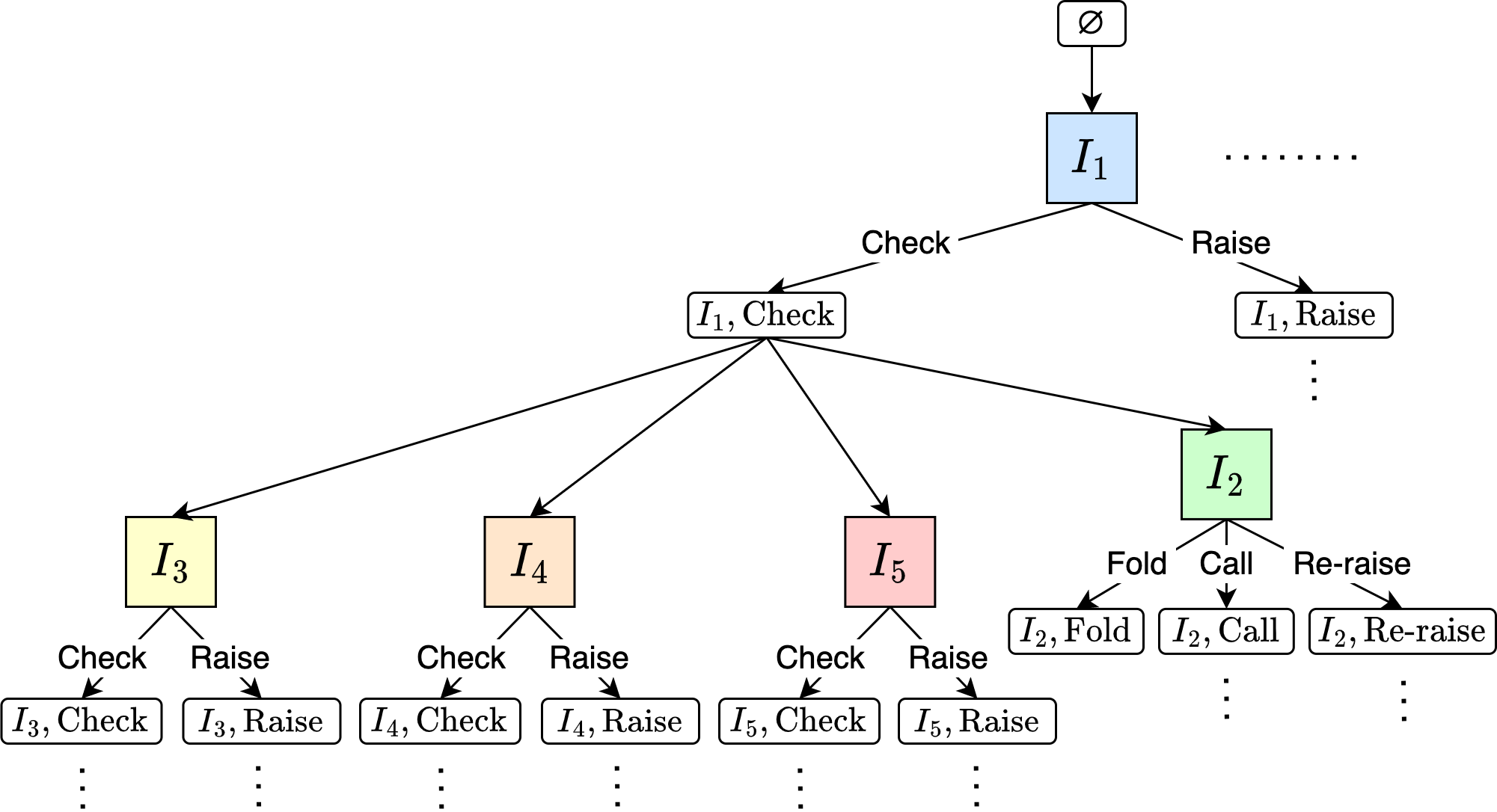}
    \caption{The \textit{information tree} of player 1 in Leduc Hold'em}
    \label{fig-leducq1}
\end{figure}
\section{Subproblem in the conjugate function}
\label{sec-app-subproblem}
Here, we will compute the analytical solution of the problem below, which is associated to our proposed prox-function.
Let us consider the following maximization problem:
\begin{maxi}
    {\bm z\in\R^n_{>0}}{\sum_{i=1}^n \xi_i z_i - w\sum_{i=1}^n z_i\ln z_i}{}{}
    \label{eqn-app-opt}
    \addConstraint{\sum_{i=1}^n z_i}{=1,}
\end{maxi}
where $w > 0$. Lagrange function $L\colon\R^n\times\R\to\R$ is defined by
\begin{align}
    L(\bm z, \mu) := \sum_{i=1}^n \xi_i z_i - w\sum_{i=1}^n z_i\ln z_i + \mu\qty(
    1 - \sum_{i=1}^n z_i
    ),
\end{align}
and the optimal solution $\bm z^*$ of \eqref{eqn-app-opt} satisfies the following equation with some $\mu^*\in\R$:
\begin{align}
    \pdv{L}{z_i}\qty(z^*,\mu^*)
    &= \xi_i - w(1 + \ln z_i^*) - \mu^* = 0, 
    \quad i=1,\dots,n
    \\
    \pdv{L}{\mu}\qty(z^*,\mu^*) 
    &= 1 - \sum_{j=1}^n z_i^* = 0.
\end{align}
Eliminate $\mu^*$ from the above equation to obtain
\begin{align}
    z_i^* = \frac{\exp(\xi_i/w)}{\sum_{j=1}^n \exp(\xi_j/w)},
\end{align}
and the optimal value is given by
\begin{align}
    \sum_{i=1}^n \xi_i z_i^* - w\sum_{i=1}^n z_i^*\ln z_i^*
    &= 
    \sum_{i=1}^n \xi_i z_i^* - w\sum_{i=1}^n z_i^*\qty{(\xi_i/w) - \ln\sum_{j=1}^n \exp(\xi_j/w)} \\
    &= w\ln\sum_{j=1}^n \exp(\xi_j/w).
\end{align}
\section{Implementation of CFR and CFR+}

Most implementations of CFR and CFR+ employ the depth-first search of the game tree~\cite{tammelin2014solving,neller2013introduction}.
This method has the advantage of greatly reducing computation time in huge games, at the expense of accuracy, by using sampling.
On the other hand, simply exploring all nodes may be computationally inefficient.
For example, in a game with multiple terminal nodes $z \in Z$ such that the pair $(p_1(z),p_2(z))$ is equal (see Section \ref{sec-efg} for the definition of $p_1, p_2$), all terminal nodes are searched at each iteration in the depth-first search type implementation, but in the BSPP type unnecessary. 

This paper presents CFR and CFR+ implementations using the BSPP representation of the extensive-form game introduced in Section \ref{sec-efg}.
These implementations are also available on~\url{https://github.com/habara-k/egt-on-efg}.

\begin{algorithm}[htbp]
\caption{CFR and CFR+ (in BSPP form)}
\begin{algorithmic}[1]
\Function{prod}{$\bm z\in\R_{\ge 0}^\abs{\Sigma}$}
\State $\bm x \gets \bm 0\in\R^\abs{\Sigma}$
\State $x_\emp \gets 1$
\For{$I\in\mathcal{I}$ in top-down order}
    \For{$a\in\A(I)$}
        \State $x_{I,a} \gets x_{p(I)} z_{I,a}$
    \EndFor
\EndFor
\State \Return $\bm x$
\EndFunction
\Statex
\Function{normalize}{$\bm r\in\R_{\ge 0}^\abs{\Sigma}$}
\State $\bm z\gets \bm 0\in\R^\abs{\Sigma}$
\For{$I\in\mathcal{I}$}
    \If{$\exists a\in\A(I) \text{ s.t. } r_{I,a}>0$}
        \For{$a\in\A(I)$}
            \State $z_{I,a} \gets r_{I,a}/\sum_{a\in\A(I)} r_{I,a}$
        \EndFor
    \Else
        \For{$a\in\A(I)$}
            \State $z_{I,a} \gets 1/\abs{\A(I)}$
        \EndFor
    \EndIf
\EndFor
\State \Return $\bm z$
\EndFunction
\Statex
\Function{regret}{$\bm z\in\R^\abs{\Sigma}, \bm u\in\R^\abs{\Sigma}$}
\State $\bm r\gets \bm 0\in\R^\abs{\Sigma}$
\For{$I\in\mathcal{I}$ in bottom-up order}
    \For{$a\in\A(I)$}
        \State $r_{I,a} \gets u_{I,a} - \sum_{a\in\A(I)} u_{I,a} z_{I,a}$
    \EndFor
    \State $u_{p(I)} \gets u_{p(I)} + \sum_{a\in\A(I)} u_{I,a} z_{I,a}$
\EndFor
\State \Return $\bm r$
\EndFunction
\algstore{cfr}
\end{algorithmic}
\end{algorithm}

\begin{algorithm}[htbp]
\begin{algorithmic}[1]
\algrestore{cfr}
\Ensure{$\bm x\in Q_1, \bm y\in Q_2$: solutions of CFR}
\State $\bm r_\X, \bm r_\Y \gets \bm 0\in\R^\abs{\Sigma_1}, \bm 0\in\R^\abs{\Sigma_2}$
\State $\bm z_\X, \bm z_\Y \gets $ \Call{normalize}{$\bm r_\X$}, \Call{normalize}{$\bm r_\Y$}
\State $\bm x^1, \bm y^1 \gets $ \Call{prod}{$\bm z_\X$}, \Call{prod}{$\bm z_\Y$}
\For{$k=1,\dots,T-1$}
    \State $\bm r_\X \gets \bm r_\X \ + $ \Call{regret}{$\bm z_\X, -\bm A\bm y^k$}
    \State $\bm r_\Y \gets \bm r_\Y \ + $ \Call{regret}{$\bm z_\Y, \bm A^\T\bm x^k$}
    \State $\bm z_\X \gets $ \Call{normalize}{$\max(\bm r_\X, \bm 0)$}
    \State $\bm z_\Y \gets $ \Call{normalize}{$\max(\bm r_\Y, \bm 0)$}
    \State $\bm x^{k+1} \gets $ \Call{prod}{$\bm z_\X$}
    \State $\bm y^{k+1} \gets $ \Call{prod}{$\bm z_\Y$}
\EndFor
\State $\bm x, \bm y \gets \frac{1}{T}\sum_{k=1}^T k\bm x^k, \frac{1}{T}\sum_{k=1}^T k\bm y^k$
\Statex
\Ensure{$\bm x\in Q_1, \bm y\in Q_2$: solutions of CFR+}
\State $\bm r_\X, \bm r_\Y \gets \bm 0\in\R^\abs{\Sigma_1}, \bm 0\in\R^\abs{\Sigma_2}$
\State $\bm z_\X, \bm z_\Y \gets $ \Call{normalize}{$\bm r_\X$}, \Call{normalize}{$\bm r_\Y$}
\State $\bm x^1, \bm y^1 \gets $ \Call{prod}{$\bm z_\X$}, \Call{prod}{$\bm z_\Y$}
\For{$k=1,\dots,T-1$}
    \State $\bm r_\X \gets \max(\bm r_\X \ + $ \Call{regret}{$\bm z_\X, -\bm A\bm y^k$}$, \bm 0)$
    \State $\bm z_\X \gets $ \Call{normalize}{$\bm r_\X$}
    \State $\bm x^{k+1} \gets $ \Call{prod}{$\bm z_\X$}
    
    \State $\bm r_\Y \gets \max(\bm r_\Y \ + $ \Call{regret}{$\bm z_\Y, \bm A^\T\bm x^{k+1}$}$, \bm 0)$
    \State $\bm z_\Y \gets $ \Call{normalize}{$\bm r_\Y$}
    \State $\bm y^{k+1} \gets $ \Call{prod}{$\bm z_\Y$}
\EndFor
\State $\bm x, \bm y \gets \frac{2}{T+T^2}\sum_{k=1}^T k\bm x^k, \frac{2}{T+T^2}\sum_{k=1}^T k\bm y^k$
\end{algorithmic}
\end{algorithm}

\end{document}